\newcommand{\deff}{\mbox{$\stackrel{\rm def}{=}$}}
\newcommand{\cE}{{\cal E}}
\newcommand{\cG}{{\cal G}}
\newcommand{\cS}{{\cal S}}
\newcommand{\cP}{{\cal P}}
\newcommand{\cT}{{\cal T}}
\newcommand{\cX}{{\cal X}}
\newcommand{\sP}{\cP}
\newcommand{\sG}{\cG}
\newcommand{\Gr}{\smash{{\sG\kern-1.5pt}_q\kern-0.5pt(n,k)}}
\newcommand{\Grr}{\smash{{\sG\kern-1.5pt}_q\kern-0.5pt(n,r)}}
\newcommand{\Gfourk}{\smash{{\sG\kern-1.5pt}_q\kern-0.5pt(4k,2k)}}
\newcommand{\Gk}{\smash{{\sG\kern-1.5pt}_q\kern-0.5pt(n,k_1)}}
\newcommand{\Gkk}{\smash{{\sG\kern-1.5pt}_q\kern-0.5pt(n,k_2)}}
\newcommand{\Grtwo}{\smash{{\sG\kern-1.5pt}_2\kern-0.5pt(n,k)}}
\newcommand{\Gkone}{\smash{{\sG\kern-1.5pt}_q\kern-0.5pt(n,k_1)}}
\newcommand{\Gktwo}{\smash{{\sG\kern-1.5pt}_q\kern-0.5pt(n,k_2)}}
\newcommand{\Ps}{\smash{{\sP\kern-2.0pt}_q\kern-0.5pt(n)}}
\newtheorem{theorem}{Theorem}
\newtheorem{cor}{Corollary}
\newtheorem{lemma}[theorem]{Lemma}
\newtheorem{remark}{Remark}
\newtheorem{example}{Example}
\newenvironment{algorithm}{%
       \begin{minipage}{\columnwidth}\vspace{0.5ex}%
       \makebox[0ex]{}\hrulefill\makebox[0ex]{}\\*\normalsize}{%
             \makebox[0ex]{}\hrulefill\makebox[0ex]{}\end{minipage}}
\begin{document}

\bibliographystyle{IEEEtran}

\title{Simple and Robust Binary Self-Location Patterns}
\author{Alfred~M.~Bruckstein,~Tuvi~Etzion,~\IEEEmembership{Fellow,~IEEE},~Raja~Giryes,~Noam~Gordon,~Robert~J.~Holt,~and~Doron~Shuldiner
\thanks{A. M. Bruckstein, T. Etzion, R. Giryes, N. Gordon, and D. Shuldiner
are with the Department of Computer Science, Technion
--- Israel Institute of Technology, Haifa 32000, Israel. (email:
freddy@cs.technion.ac.il, etzion@cs.technion.ac.il, raja@cs.technion.ac.il,  ngordon@cs.technion.ac.il,
doron.email@gmail.com).}
\thanks{R. J. Holt is with the Department of Mathematics and Computer Science,
Queensborough, City University of New York, Bayside, NY 11364, USA.
(email: RHolt@qcc.cuny.edu). }
\thanks{The work of T. Etzion was supported in part by the United States-Israel
Binational Science Foundation (BSF), Jerusalem, Israel, under
Grant No. 2006097.}
\thanks{The work of R. J. Holt was supported in part by
Bell Laboratories of Alcatel-Lucent.} }

\maketitle

\begin{abstract}
A simple method to generate a two-dimensional binary grid pattern,
which allows for absolute and accurate self-location in a finite
planar region, is proposed. The pattern encodes position
information in a local way so that reading a small number of its
black or white pixels at any place provides sufficient data from
which the location can be decoded both efficiently and robustly.
\end{abstract}

\begin{keywords}
de Bruijn sequences, M-sequences, self-location patterns
\end{keywords}

%*******************************************************************************
%*                                                                             *
%*                            1. Introduction                                  *
%*                                                                             *
%*******************************************************************************

\section{Introduction}

\PARstart{T}{ake} a blindfolded man on a random one-hour walk around town and
then remove his blindfold. How will he know where he is?  He has
several options, based on the information he can gather.  The man
could carefully count his steps and take note of every turn during
the blindfolded walk to know his location relative to the
beginning of his trip.  Armed with a navigation tool such as a
sextant or GPS unit, he could ask the stars or the GPS satellites
where he is. Lastly, he could simply look around for a reference,
such as a street sign, a landmark building, or even a city map
with a little arrow saying ``\emph{You are here}.''

There are numerous applications where a similar problem is
encountered. We need to somehow measure the position of a mobile
or movable device, using some sort of sensory input.  Wheeled
vehicles can count the turns of their wheels much like the man
counting his steps.  Similarly, many devices, from industrial
machine stages to ball-mice, employ sensors which are coupled with
the mechanics and count small physical steps of a known length, in
one or more dimensions. The small relative position differences
can be accumulated to achieve \emph{relative self-location} to a
known starting point.  More recent technologies, such as those
found in modern optical mice, use imaging sensors instead of
mechanical encoders to estimate the relative motion by constantly
inspecting the moving texture or pattern of the platform beneath
them.

Sometimes the inherent accumulating error in relative
self-location methods, or some other reasons, make them infeasible
or unfit for certain applications, where we would want the
capability to obtain instant and accurate \emph{absolute
self-location}.  Given several visible landmarks of known
locations, a mobile robot could calculate its position through a
triangulation \cite{cohen92}.  Alternatively, cleverly designed
space fiducials (e.g., \cite{Bruckstein2000}), whose appearance
changes with the angle of observation, can also serve for
self-location.

Much like street signs for people, there are absolute
self-location methods that provide sufficient local information to
the device sensors, such that the absolute positioning can be
attained. Specifically, planar patterns have been suggested, where
a small local sample from anywhere in the pattern provides
sufficient information for decoding the absolute position. A naive
example could consist of a floor filled with densely packed
miniature markings, in which the exact coordinates are literally
inscribed inside each marking. Of course, that would require a
high sensor resolution and character recognition capabilities.
Indeed, there are much more efficient methods, which do with
considerably less geometric detail in the pattern. Some commercial
products have been utilizing this approach, e.g., a pen with a
small imaging device in its tip, writing on paper with a special
pattern printed on it, which allows full tracking of the pen
position at any time \cite{anoto-web}.

A classic method for absolute self-location in one dimension is
the use of de Bruijn sequences~\cite{deB46,Fre82}.  A de Bruijn
sequence of order $n$ over a given alphabet of size $q$ is a
cyclic sequence of length $q^n$, which has the property that each
possible sequence of length $n$ of the given alphabet appears in
it as a consecutive subsequence exactly once. Thus, sampling~$n$
consecutive letters somewhere in the sequence is sufficient for
perfect positioning of the sampled subsequence within the
sequence. Several methods for the construction of de Bruijn
sequences have been proposed,
e.g.~\cite{Fre82,Lem70,Ara76,EtLe84}. There is also a
two-dimensional generalization, i.e., it is possible to construct
a two-dimensional cyclic arrays in which each rectangular
sub-array of a certain size $k \times n$ appears exactly once in
the array. These types of arrays are called \emph{perfect maps},
e.g.~\cite{Etz88,Pat94} and they can serve as the basis for
absolute self-location on the plane.

Of special interest and importance in communication are
maximal-length linear shift-register sequences known also as
M-sequences or pseudo-noise sequences~\cite{Golomb}. An
\emph{M-sequence} of order $n$ is a sequence of length $2^n-1$
generated by a linear feedback shift-register of length $n$. In a
cyclic sequence of this type, each nonzero $n$-tuple appears
exactly once as a window of length $n$ in one period of the
sequence exactly once. These sequences have many important and
desired properties~\cite{Golomb,McSl76}. A two-dimensional
generalization of such sequences was presented in~\cite{McSl76}
and are called pseudo-random arrays. We note also that M-sequences
can be used for robust one-dimensional location by using their
error-correction properties, as analyzed in~\cite{KuWe92}.

In this work we propose a simple product construction to generate
a two-dimensional binary patterns for absolute self-location. The
paper is organized as follows. In Section~\ref{sec:construct} we
present the product construction based on two sequences with some
one-dimensional window properties. A two-dimensional array with
optimal self-location based on sensing a cross shape is obtained
by this construction. In Section~\ref{sec:robust} we prove that
the same construction can be used for reasonably effective
error-correction of self-location with a rectangular shape. Our
conclusions and some interesting problems for future research are
presented in Section~\ref{sec:robust}.

\section{The Proposed 2D Self-Location Pattern}
\label{sec:construct}

Our approach for building two-dimensional arrays with
self-location properties is based on a product of two sequences,
one of which is a de Bruijn sequence and the other being a
sequence in which only half of the patterns appear.

\subsection{Half de Bruijn Sequences}
A \emph{half de Bruijn sequence} of order $n$ is a (cyclic)
sequence of length $2^{n-1}$ which has the property that for each
possible $n$-tuple $X$, either $X$ or $\bar{X}$ (the bitwise
complement of $X$), but not both, appear in the sequence exactly
once as a subsequence.

There are many different ways to construct half de Bruijn
sequences. One method, in which a half de Bruijn sequence of
length $2^{n-1}$ is generated from a de Bruijn sequence of order
$n-1$ by using the inverse of the well known mapping {\bf D},
called the {\bf D}-morphism, is described in~\cite{Lem70}. Another
one is based on M-sequences. The following results were proved
in~\cite{Ara76}.
\begin{theorem}
If $\cS$ is an M-sequence of order $n-1$ then for each pair of
$n$-tuples $X$ and $\bar{X}$ either $X$ or $\bar{X}$ appears in
$\cS$, with the exception of the pair which consists of the
all-zero and all-one $n$-tuples.
\end{theorem}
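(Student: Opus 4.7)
The plan is to exploit the linear recurrence structure of $\cS$. Recall that an M-sequence of order $n-1$ has period $2^{n-1}-1$, contains every nonzero $(n-1)$-tuple exactly once as a window, and is generated by a linear recurrence $a_{k+n-1} = \sum_{i=0}^{n-2} c_i a_{k+i}$ whose characteristic polynomial is primitive over $\F_2$. Define the linear form $f(Y) = \sum_{i=0}^{n-2} c_i Y_i$, which gives the bit following any $(n-1)$-tuple window $Y = (Y_0,\ldots,Y_{n-2})$.

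I would begin with a counting set-up. Every $n$-tuple window of $\cS$ is uniquely determined by its first $n-1$ entries, so the $n$-windows are in bijection with the nonzero $(n-1)$-tuples $Y$, via $X = Y \cdot f(Y)$. Hence $\cS$ contains exactly $2^{n-1}-1$ windows of length $n$, which equals the number $(2^n - 2)/2 = 2^{n-1}-1$ of unordered pairs $\{X,\bar X\}$ other than the excluded pair $\{0\cdots0,\,1\cdots1\}$. It therefore suffices to show: (i) neither $0\cdots0$ nor $1\cdots1$ appears as an $n$-window in $\cS$; and (ii) no two distinct $n$-windows of $\cS$ are bitwise complements.

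For (i), the all-zero $n$-tuple is ruled out immediately, since its prefix $0\cdots0$ is the unique $(n-1)$-tuple missing from $\cS$. For the all-one $n$-tuple: if it appeared, then the $(n-1)$-tuple $1\cdots1$ would occupy two consecutive window positions, contradicting the uniqueness part of the M-sequence property. This forces $f(1\cdots1) = 0$, i.e.\ the sum $C \eqdef \sum_{i=0}^{n-2} c_i$ is $0 \pmod 2$.

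The main step is (ii), and it reduces to a clean identity using the linearity of $f$: since $f(\bar Y) = \sum_i c_i(1-Y_i) = C + f(Y)$, the assumption $C=0$ from (i) yields $f(\bar Y) = f(Y)$ for every $Y$. If two windows $Y\cdot f(Y)$ and $\bar Y\cdot f(\bar Y)$ were complements, then $f(\bar Y)$ would equal $\overline{f(Y)}$, forcing $C=1$, a contradiction. The only place where one must be slightly careful is the boundary case $Y = 1\cdots1$, where $\bar Y = 0\cdots0$ does not correspond to any window of $\cS$; but this precisely matches the excluded pair $\{0\cdots0, 1\cdots1\}$ and so presents no obstacle. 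I expect the identity $f(\bar Y) = f(Y) + C$ together with the consequence $C=0$ of primitivity to be the crux of the argument, while the rest is bookkeeping driven by the counting match $2^{n-1}-1 = 2^{n-1}-1$.
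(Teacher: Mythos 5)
The paper itself gives no proof of this theorem --- it is quoted as a known result from the Arazi reference --- so there is no in-paper argument to compare against; your proposal must stand on its own, and it does. The counting frame is right: the $2^{n-1}-1$ length-$n$ windows are exactly $\{\,Y\cdot f(Y) : Y \neq 0\cdots 0\,\}$, and this number coincides with the number of complementary pairs other than $\{0\cdots 0,\,1\cdots 1\}$, so injectivity into the non-excluded pairs forces the bijection. Your step (i) is sound (the all-zero $n$-tuple is excluded by its missing prefix, and the all-one $n$-tuple would make the all-one $(n-1)$-tuple occur at two distinct cyclic positions, which fails only in the degenerate case $n\le 2$ where the period is $1$ and the theorem is vacuous anyway), and the resulting identity $C=\sum_i c_i=0$ is exactly the combinatorial shadow of the algebraic fact that a primitive (hence irreducible, for degree $\ge 2$) characteristic polynomial has $p(1)\ne 0$; either derivation works. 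Step (ii) then follows cleanly from $f(\bar Y)=f(Y)+C=f(Y)$, since complementary windows would require the appended bits to differ. An equivalent, slightly slicker packaging is to observe that all length-$n$ windows lie in the hyperplane $\ker\ell$ where $\ell(X)=x_{n-1}+\sum_{i=0}^{n-2}c_ix_i$, that $\ell(1\cdots 1)=1+C=1$ so the all-one vector is not in this hyperplane, and hence each coset $\{X,X+1\cdots 1\}$ meets the hyperplane exactly once; but this is the same argument in linear-algebraic dress. Your proof is correct.
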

\begin{cor}
Let $\cS$ be an M-sequence of order $n-1$ and let $\cS'$ be the
sequence obtained from $\cS$ by adding another \emph{one} to the
unique run with $n-1$ \emph{ones}. Then $\cS'$ is a half de Bruijn
sequence.
\end{cor}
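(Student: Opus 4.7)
The sequence $\cS$ has length $2^{n-1}-1$, so $\cS'$ has length $2^{n-1}$, which matches the length demanded of a half de Bruijn sequence of order $n$. The plan is to compare the multiset of $n$-windows of $\cS'$ with that of $\cS$ and then invoke the preceding theorem.

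The core step is to show that the $n$-windows of $\cS'$ are exactly the $n$-windows of $\cS$ together with one additional window, the all-one $n$-tuple $1^n$. Since $\cS$ is an M-sequence of order $n-1$, the run of $n-1$ ones is maximal, i.e., it is immediately preceded and followed by a $0$; hence inserting a single $1$ into this run only lengthens it to $n$ consecutive ones and does not merge with any neighbouring ones. A position-by-position accounting of the $2n-2$ windows overlapping the run in $\cS$ versus the $2n-1$ overlapping windows in $\cS'$ shows a bijection between them extended by the new window $1^n$ (which sits exactly on the enlarged run); every window disjoint from the run is identical in the two sequences. Counting then confirms: $|\cS'|=|\cS|+1$ and exactly one new $n$-tuple, $1^n$, appears.

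With this in hand, the conclusion is almost immediate. The theorem tells us that in $\cS$ each pair $(X,\bar X)$ other than $(0^n,1^n)$ has at least one representative among its windows; since $\cS$ has $2^{n-1}-1$ windows and there are exactly $2^{n-1}-1$ such pairs, each non-excepted pair is represented by exactly one window, occurring exactly once. Passing to $\cS'$, these single occurrences are preserved and no additional occurrence is introduced for them, because the only new window is $1^n$. The previously excepted pair is now handled too: $1^n$ appears exactly once (it could not appear anywhere in $\cS$ since the maximal run of ones there was only $n-1$), while $0^n$ still does not appear, because the longest run of zeros in an M-sequence of order $n-1$ is $n-2$ and the insertion does not change any run of zeros. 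Thus $\cS'$ has the half de Bruijn property.

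The only non-routine step is the combinatorial bookkeeping of how the $n$-windows overlap the lengthened run; everything else is counting and an application of the quoted theorem.
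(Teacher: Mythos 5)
Your argument is correct. Note that the paper itself gives no proof of this corollary (or of the preceding theorem); both are quoted as results of Arazi~\cite{Ara76}, so there is no in-paper proof to compare against. Your two key steps are sound and are exactly what is needed: (i) since the run of $n-1$ ones in $\cS$ is bounded by zeros (the $(n-1)$-tuple $1^{n-1}$ occurs only once, so the run cannot be longer), inserting one more \emph{one} changes the multiset of $n$-windows only by adding the single new window $1^n$, all other windows being in bijection with those of $\cS$; and (ii) the pigeonhole count --- $2^{n-1}-1$ windows of $\cS$ against $2^{n-1}-1$ non-exceptional complement pairs, each hit at least once by the quoted theorem --- upgrades ``appears'' to ``exactly one of $X,\bar X$ appears, exactly once,'' which the statement of the theorem alone does not literally give. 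Together with the observation that $0^n$ still cannot occur (the longest zero run is $n-2$ and is untouched), this establishes the half de Bruijn property.
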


\subsection{The Construction}

For two sequences $\cT = (t_1,\ldots,t_K)$ and $\cS =
(s_1,\ldots,s_N)$ the product $\cT \otimes \cS$ is a $K \times N$
array $G$ in which  $g_{ij}$, $1 \leq i \leq K$, $1 \leq j \leq
N$, contains the value $t_i \oplus s_j$ (where $\oplus$ denotes
modulo 2 addition, also known as the XOR operator).

Take a half de Bruijn sequence $\cT = (t_1,\ldots,t_K)$ and a de
Bruijn sequence $\cS = (s_1,\ldots,s_N)$ of orders $k$ and $n$,
and lengths $K=2^{k-1}$ and $N=2^n$, respectively, and let $G= \cT
\otimes \cS$. Clearly, each row in $G$ equals either $\cS$ or
$\bar{\cS}$. Similarly, each column equals either $\cT$ or
$\bar{\cT}$. Thus, each row and each column retain their window
property and can serve for self-location in each dimension.
\begin{theorem}
\label{thm:cross} Each cross shaped pattern with $k$ vertical and
$n$ horizontal entries appears exactly once as a pattern in the
array $G$.
\end{theorem}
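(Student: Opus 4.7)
The plan is to match cardinalities and then establish an injection from positions in $G$ to cross-shaped patterns. The array $G$ has $K\cdot N = 2^{k-1}\cdot 2^{n} = 2^{k+n-1}$ positions (and the array is cyclic in both directions, by inheritance from the cyclic sequences). A cross with $k$ vertical and $n$ horizontal entries contains $k+n-1$ pixels (the center is shared), so there are exactly $2^{k+n-1}$ possible binary cross patterns. Thus it suffices to show that distinct positions give distinct cross patterns, and equivalently, that from the cross pattern read at an unknown position $(i,j)$ one can uniquely recover $(i,j)$.

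Fix a cross at some position $(i,j)$, and let the center pixel sit on column $j$ of row $i$. The vertical arm is the $k$-tuple $(t_{i}\oplus s_j,\,t_{i+1}\oplus s_j,\ldots,t_{i+k-1}\oplus s_j)$, which is either the $k$-window $W_i := (t_i,\ldots,t_{i+k-1})$ of $\cT$ (when $s_j=0$) or its bitwise complement $\overline{W_i}$ (when $s_j=1$). The key step is to invoke the half de Bruijn property of $\cT$: for every pair $\{W,\overline{W}\}$ of $k$-tuples, exactly one member occurs as a window of $\cT$, and it occurs exactly once. Consequently, given the vertical arm of the cross, one can read off the unique window of $\cT$ lying in its complementation class, and thereby recover \emph{both} the row index $i$ and the bit $s_j$.

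Once $i$ is known, so is $t_i$. XORing the horizontal arm $(t_i\oplus s_j,\ldots,t_i\oplus s_{j+n-1})$ with $t_i$ recovers the $n$-window $(s_j,\ldots,s_{j+n-1})$ of $\cS$. Since $\cS$ is a de Bruijn sequence of order $n$, every $n$-tuple appears exactly once, so this $n$-window uniquely determines $j$. Hence the position $(i,j)$ is completely determined by the cross pattern, which gives the required injection; combined with the cardinality count above, we conclude that every cross pattern appears exactly once.

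The only delicate point is the asymmetry between the two sequences: the de Bruijn property of $\cS$ alone would not let us distinguish a horizontal window from its complement (so row information and column shift cannot be untangled from a horizontal arm alone), and conversely a pure de Bruijn sequence in the vertical direction would not encode $s_j$. The construction works precisely because the half de Bruijn property of $\cT$ simultaneously identifies the vertical shift \emph{and} the value $s_j$, breaking the complementation ambiguity so that the de Bruijn property of $\cS$ can then pin down $j$. This is the step I expect to spell out most carefully.
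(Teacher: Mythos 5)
Your proof is correct, but it establishes the ``exactly once'' by a different mechanism than the paper. The paper argues in the pattern-to-position direction: it fixes a target cross (vertical vector $X$, horizontal vector $Y$), uses the half de Bruijn property to locate the unique row block containing the representative $\cX$ of $\{X,\bar X\}$, observes that candidate occurrences are confined to the two complementary sub-arrays $\cX\otimes Y$ and $\cX\otimes\bar Y$, and checks that exactly one of the two complementary crosses found there matches the target --- so existence and uniqueness are exhibited directly for every pattern. You instead prove injectivity of the position-to-pattern map via a decoding argument (the half de Bruijn property of $\cT$ recovers the row index and simultaneously the bit $s_j$, after which the de Bruijn property of $\cS$ recovers the column index), and then obtain surjectivity, hence ``exactly once,'' from the cardinality match $2^{k-1}\cdot 2^{n}=2^{k+n-1}$ between positions and $(k+n-1)$-bit cross patterns. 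Both arguments rest on the same two combinatorial facts, but yours has the advantage of doubling as a correctness proof of the location-decoding procedure of Section~\ref{sec:computing} (it is constructive in the direction the application actually needs), while the paper's version makes the existence of every pattern explicit without appeal to a counting step. Your closing remark correctly isolates the essential asymmetry --- that the half de Bruijn sequence must sit in the vertical direction to break the complementation ambiguity --- which the paper's proof uses but does not emphasize.
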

\begin{proof}
Let $X$ be a column vector of length $k$ and $Y$ be a row vector
of length $n$. Either $X$ or $\bar{X}$ appears in the sequence
$\cT$. Let $\cX$ the pattern which appears. Both $Y$ and $\bar{Y}$
appear in the sequence $\cS$. Crosses with vertical vector $X$ and
horizontal vector $Y$ appear in $G$ only in the portions related
to $Z_1 \deff \cX \otimes Y$ and $Z_2
\deff \cX \otimes \bar{Y}$. Moreover, the crosses in
$Z_1$ are complements of the crosses $Z_2$. For each cross inside
$Z_1$ and $Z_2$ there are two possible assignments, depending on
the mutual entry of the vertical and horizontal component. Each
one of these values appears in either $Z_1$ or $Z_2$.
\end{proof}
By Theorem~\ref{thm:cross}, we can use a cross sensor array to
sample $k$ vertical and $n$ horizontal pixels (with one mutual
pixel) in order to obtain self-location.
\begin{cor}
The proposed method is optimal in terms of the number of sampled
pixels required to achieve self-location with a cross of vertical
length $k$ and horizontal length $n$.
\end{cor}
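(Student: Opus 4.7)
The plan is to argue this via a simple counting/information-theoretic bound. A cross-shaped sensor with $k$ vertical and $n$ horizontal entries covers exactly $k+n-1$ binary pixels (the $k$ column entries plus the $n$ row entries share one central pixel). Hence the number of distinct binary patterns that such a sensor can ever read is at most $2^{k+n-1}$.

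Next, I would observe that for any array to admit self-location via this cross sensor, the cross pattern read at each of its positions must be distinct — otherwise two positions would be indistinguishable to the sensor. Consequently, the number of positions in any self-locating array is upper bounded by $2^{k+n-1}$.

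Finally, I would compare this bound to the size of $G$. The array constructed above is $K \times N = 2^{k-1} \times 2^n$, so it contains exactly $2^{k-1} \cdot 2^n = 2^{k+n-1}$ positions, meeting the bound with equality. Since any cross sensor with fewer than $k+n-1$ pixels would yield strictly fewer than $2^{k+n-1}$ distinguishable readings and thus cannot support self-location on an array of this size, the construction is optimal.

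I do not anticipate any real obstacle here; the argument is essentially a pigeonhole/entropy bound, and the main subtlety is only to verify the pixel count of the cross ($k+n-1$ rather than $k+n$) and to observe that Theorem~\ref{thm:cross} establishes the matching upper achievement.
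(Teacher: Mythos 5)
Your argument is correct and is precisely the counting argument the paper intends: the paper states this corollary without an explicit proof, treating it as immediate from Theorem~2 because the cross reads exactly $k+n-1$ pixels while the array has $2^{k-1}\cdot 2^n = 2^{k+n-1}$ positions, so the information-theoretic lower bound is met with equality. Your pixel count ($k+n-1$, accounting for the shared central pixel) and the pigeonhole step are both right, so there is nothing to add.
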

\begin{cor}
\label{cor:2Dwin} In the array $G$ each sampled sub-array of size
$k \times n$ has a unique location.
\end{cor}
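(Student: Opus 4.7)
The plan is to reduce this directly to the cross self-location result already established in Theorem~\ref{thm:cross}. A $k \times n$ sub-array of $G$ contains, along its leftmost column and its topmost row, a cross-shaped configuration of $k$ vertical and $n$ horizontal entries that share the top-left corner as their mutual pixel. Since Theorem~\ref{thm:cross} guarantees that every such cross appears exactly once in $G$, the location of this embedded cross is determined by the sub-array, and therefore so is the location of the sub-array itself. This gives both existence (every sub-array sits somewhere, trivially) and uniqueness of the location.

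For readers who prefer a self-contained argument, the same conclusion follows from a direct two-step decoding. Given a candidate sub-array $M$ with entries $M_{a,b} = t_{i+a-1} \oplus s_{j+b-1}$, first read the leftmost column $(M_{1,1},\ldots,M_{k,1})$. Depending on the unknown bit $s_j$, this column equals either the window $(t_i,\ldots,t_{i+k-1})$ of $\cT$ or its bitwise complement. By the defining property of a half de Bruijn sequence of order $k$, exactly one of these two possibilities occurs in $\cT$, and at a unique position; this simultaneously determines $i$ and $s_j$. Next, read the top row of $M$: knowing $t_i$ from the previous step recovers the $n$-tuple $(s_j,\ldots,s_{j+n-1})$, which, by the de Bruijn property of $\cS$, occurs exactly once in $\cS$ and therefore pins down $j$.

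The ``hard part'' is essentially cosmetic: one must be a bit careful that the corner-sharing L-shape formed by the first row and first column of the sub-array is indeed covered by the cross of Theorem~\ref{thm:cross} (the proof of that theorem made no use of where the mutual pixel of the cross sits, so any position is fine), and one must remain consistent about the cyclic interpretation of $G$ inherited from the cyclicity of $\cT$ and $\cS$. Beyond that, the corollary is a direct consequence, and I would keep the write-up to a one-line invocation of Theorem~\ref{thm:cross}.
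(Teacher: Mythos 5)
Your proposal is correct and matches the paper's (implicit) argument: the corollary is stated as an immediate consequence of Theorem~\ref{thm:cross}, the intended reasoning being exactly your first paragraph --- a $k\times n$ sub-array contains a cross of $k$ vertical and $n$ horizontal entries, whose location is unique, and the proof of the theorem is indeed insensitive to where the mutual pixel sits. Your second, self-contained decoding argument is a sound (and slightly more explicit) restatement of the same idea, so nothing further is needed.
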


\begin{remark}
Similar and more sophisticated product constructions to generate
arrays with low redundancy and effective two-dimensional
error-correction capabilities, were suggested in various papers,
e.g.~\cite{BBZS,EtYa09}.
\end{remark}
\begin{remark}
In practice, the planar domain is generally not cyclic. In order
to retain the ability to sense all $2^{k-1} \times 2^{n}$ possible
locations with a sensor whose footprint is $k \times n$ pixels
array, we extend $\cT$ and $\cS$ by appending their first $k-1$
and $n-1$ bits, respectively, to their ends. The result is now a
$(2^{k-1}+k-1) \times (2^{n}+n-1)$ array.
\end{remark}

\begin{example}
An example of our proposed two-dimensional grid pattern can be
seen in Fig.~\ref{fig:Grid4_4}. It was generated using a de Bruijn
sequence of order $4$ in the horizontal axis, and a half de Bruijn
sequence of order $5$ in the vertical axis, resulting in a cyclic
array of $16 \times 16$ pixels. The first column and the first row
in the figure contain the location indexes. The second column and
the second row contain $\cT$ and $\cS$, respectively. From the bit
values inside the grid we can decode our position. An example of a
sensor readout is marked in the table. The sensor is a $5$ by $4$
cross. The vertical readout is $10010$, and the horizontal readout
is $1000$ and its unique position can be easily decoded from $\cT$
and $\cS$.
\end{example}
\begin{figure*}[htbp]
%\begin{figure*}[htbp]
\begin{center}
\begin{scriptsize}
\begin{tabular}{|c|c|cccccccccccccccc|}
%\begin{tabular}{|c|c|p{0.5em}p{0.5em}p{0.5em}p{0.5em}p{0.5em}p{0.5em}p{1em}p{0.5em}p{0.5em}p{0.5em}p{0.5em}p{0.5em}p{0.5em}p{0.5em}p{0.5em}p{0.5em}|}

\hline
   & & 1 & 2 & 3 & 4 & 5 & 6 & 7 & 8 & 9 & 10 & 11 & 12 & 13 & 14 & 15 & 16 \\
\hline
   &   & 0 & 0 & 0 & 0 & 1 & 1 & 1 & 1 & 0 & 1 & 1 & 0 & 0 & 1 & 0 & 1 \\
\hline
 1 & 1 & 1 & 1 & 1 & 1 & 0 & 0 & 0 & 0 & 1 & 0 & 0 & 1 & 1 & 0 & 1 & 0 \\
 2 & 1 & 1 & 1 & 1 & 1 & 0 & 0 & 0 & 0 & 1 & 0 & 0 & 1 & 1 & 0 & 1 & 0 \\
 3 & 1 & 1 & 1 & 1 & 1 & 0 & 0 & 0 & 0 & 1 & 0 & 0 & 1 & 1 & 0 & 1 & 0 \\
 4 & 1 & 1 & 1 & 1 & 1 & 0 & 0 & 0 & 0 & 1 & 0 & 0 & 1 & 1 & 0 & 1 & 0 \\
 5 & 1 & 1 & 1 & 1 & 1 & 0 & 0 & 0 & 0 & 1 & 0 & 0 & 1 & 1 & 0 & 1 & 0 \\
 6 & 0 & 0 & 0 & 0 & 0 & 1 & 1 & 1 & 1 & 0 & 1 & 1 & 0 & 0 & 1 & 0 & 1 \\
 7 & 1 & 1 & 1 & 1 & 1 & 0 & 0 & 0 & 0 & 1 & 0 & 0 & 1 & 1 & 0 & 1 & 0 \\
 8 & 0 & 0 & 0 & 0 & 0 & \textcircled{\raisebox{-0.9pt}{1}} & 1 & 1 & 1 & 0 & 1 & 1 & 0 & 0 & 1 & 0 & 1 \\
 9 & 1 & 1 & 1 & 1 & 1 & \textcircled{\raisebox{-0.9pt}{0}} & 0 & 0 & 0 & 1 & 0 & 0 & 1 & 1 & 0 & 1 & 0 \\
10 & 1 & 1 & 1 & 1 & \textcircled{\raisebox{-0.9pt}{1}} & \textcircled{\raisebox{-0.9pt}{0}} & \textcircled{\raisebox{-0.9pt}{0}} & \textcircled{\raisebox{-0.9pt}{0}} & 0 & 1 & 0
& 0 & 1 & 1
   & 0 & 1 & 0 \\
11 & 0 & 0 & 0 & 0 & 0 & \textcircled{\raisebox{-0.9pt}{1}} & 1 & 1 & 1 & 0 & 1 & 1 & 0 & 0 & 1 & 0 & 1 \\
12 & 1 & 1 & 1 & 1 & 1 & \textcircled{\raisebox{-0.9pt}{0}} & 0 & 0 & 0 & 1 & 0 & 0 & 1 & 1 & 0 & 1 & 0 \\
13 & 1 & 1 & 1 & 1 & 1 & 0 & 0 & 0 & 0 & 1 & 0 & 0 & 1 & 1 & 0 & 1 & 0 \\
14 & 1 & 1 & 1 & 1 & 1 & 0 & 0 & 0 & 0 & 1 & 0 & 0 & 1 & 1 & 0 & 1 & 0 \\
15 & 0 & 0 & 0 & 0 & 0 & 1 & 1 & 1 & 1 & 0 & 1 & 1 & 0 & 0 & 1 & 0 & 1 \\
16 & 0 & 0 & 0 & 0 & 0 & 1 & 1 & 1 & 1 & 0 & 1 & 1 & 0 & 0 & 1 & 0 & 1 \\
\hline
\end{tabular}
\end{scriptsize}
\end{center}
\caption{The $16 \times 16$ product array of $\cT \otimes \cS$.
The marked cells illustrate a readout by a cross-shaped sensor.}
\label{fig:Grid4_4}
\end{figure*}

%\subsection{Readout Optimality}
%The proposed method is optimal in terms of the number of sampled
%pixels required to achieve self-location.  Given orders $k$ and
%$n$, the lengths of $\cS$ and $\cT$ are $2^{k-1}$ and $2^n$,
%respectively, yielding a cyclic array of $2^{k+n-1}$ pixels in
%total.  As we said above, we propose a sensor assembly containing
%$k$ vertical and $n$ horizontal pixels with one mutual pixel, or
%$k+n-1$ pixels in total, which is the minimum number of pixels
%required to encode all $2^{k+n-1}$ possible locations on the
%array.

%\subsection{An Acyclic Implementation}
%In practice, the planar domain is generally not cyclic. In order
%to retain the ability to sense all $2^{k-1}$ by $2^{n}$ possible
%locations with a sensor whose footprint is $k$ by $n$ pixels, we
%extend $\cT$ and $\cS$ by appending their first $k-1$ and $n-1$
%bits, respectively, to their ends. The extended array dimensions
%are now $2^{k-1}+k-1$ by $2^{n}+n-1$, respectively.

\subsection{Computing the Location}\label{sec:computing}

The first step in our method recovers the one-dimensional
subsequences that correspond to the location in each dimension.
Essentially, the two-dimensional problem is now reduced to two
independent one-dimensional decoding problems. Decoding the
location of a subsequence in a de Bruijn sequence is a well-known
problem. Decoding of a half de Bruijn sequence is done similarly.

A classic approach of creating a de Bruijn sequence $\cS$ of order
$n$, requires $\mathbf{O}(n)$ space and $\mathbf{O}(n \cdot 2^n)$
time to generate the whole sequence $\cS$~\cite{Fre82,EtLe84}.
This involves $\mathbf{O}(n)$ space and $\mathbf{O}(n \cdot 2^n)$
time, with $n$ being the order of the de Bruijn sequence. If
running time is an issue, one could create and store in advance a
look-up table which lists the locations of all subsequences. This
yields $\mathbf{O}(n)$ time complexity, but requires
$\mathbf{O}(n\cdot 2^n)$ space for the table. For larger $n$, a
more flexible trade-off between time and space complexity was
suggested in~\cite{petriu1988}. A partial look-up table of evenly
spaced locations called \emph{milestones} is created in advance.
During runtime, the algorithm which generates the sequence is
initialized with the query subsequence and then iterated until one
of the milestones is encountered. For example, this can yield
$\mathbf{O}(n \cdot 2^{\frac{n}{2}})$ time complexity and will
require $\mathbf{O}(n \cdot 2^{\frac{n}{2}})$ space for the table.

In either case, implementation of the self location process using
modern computer systems is feasible, at least for reasonable and
practical values of $n$, depending on the application.  Take
$n=16$ for a concrete example. It allows a definition of $2^{16}$
locations, e.g., a resolution of $0.1 \mathrm{mm}$ over a range of
about $6.5$ meters.  In the first approach it would take, in the
worst case, about $65\mathrm{k}$ simple iterations (on a $16$-bit
register), which can be performed reasonably quickly on current
modest embedded processors currently clocked at about tens or
hundreds of Megahertz.  In the second approach, the look-up table
would consume about $128\mathrm{k}$ bytes (each entry being a
two-byte word), which is, again, a quite modest requirement given
today's memory capabilities.

There are more efficient methods to generate de Bruijn
sequences~\cite{MEP96} which can be used in case of an application
in which $k$ and $n$ are much larger. The problem of decoding
perfect maps was considered for example in~\cite{MiPa94}. A comprehensive
survey on this topic was given in~\cite{BuMi91}.

\section{Robust Self-Location}
\label{sec:robust}

%Until now we considered a system with no sensing faults, and used
%the minimum number of $k+n-1$ bits (read in a cross-shaped
%pattern) required for decoding the location.  Recall that not only
%the cross-shaped pattern is unique in the array, but also the
%entire $k \times n$ rectangular sub-array is unique. Thus, if we
The cross shaped sensor is rather `spread out', so it might be a
disadvantage in applications. In this section we show that this
weakness becomes an advantage for robust self-location when the
sensor is of a rectangular shape. If we use a $k \times n$ pixel
sensor (see Corollary~\ref{cor:2Dwin}), we can utilize the
inherent redundancy within the $kn$ bits to decode the location
while overcoming a considerable number of faulty bits readings.
This is also a very practical choice, considering that
two-dimensional rectangular sensor grids are the most common
variety and are the standard choice for most applications.

%The redundancy is clearly seen in the pattern, where all rows
%(resp.~columns) are either identical or inverted copies of one
%another. Mathematically speaking, in the absence of noise, the
%readout matrix has rank $1$.

We assume that less than quarter of the bits in each row and less
than half of the bits in each column of the input array are in
error. As it will be shown in the sequel, this is a fair
assumption which can account for quite strong noise in practical
terms. The algorithm for robust self-location presented in
Fig.~\ref{fig:algorithm} is a simple majority decoding.

\begin{figure}[hbt]
\centering
\begin{algorithm}
\noindent \textit{\bf {Robust self-location algorithm}}

The algorithm's input is a rectangle $Z \deff \{ z_{ij} ~:~ 1\leq
i \leq k, 1 \leq j \leq n \} =\left( X \otimes Y \right) \oplus
\cE$; where $X$ is a vertical $k$-tuple of a vertical half de
Bruijn sequence $\cT$; $Y$ is a horizontal $n$-tuple of a
horizontal de Bruijn subsequence $\cS$; and $\cE$ is a $k \times
n$ error pattern. We assume that less than $\frac{n}{4}$ of the
bits in each row of $\cE$ are \emph{ones} and less than
$\frac{k}{2}$ of the bits in each column of $\cE$ are \emph{ones}.
The output is the original horizontal and vertical subsequences
$X$ and $Y$, respectively.

\begin{itemize}
\item Assume that the first bit of $X$ is $b$. Let $D$ be the
first row of $Z$.

\item For each row $A$ of $Z$
\begin{itemize}
\item if more than half of the bits of $A \oplus D$ are
\emph{zeroes} then the corresponding bit of $X$ is $b$

\item otherwise, the corresponding bit of $X$ is $\bar{b}$.
\end{itemize}
\item Assign 0 or 1 to $b$ to obtain $X$ which appears in $\cS$.

\item For each column $B$ of $Z$
\begin{itemize}
\item if more than half of the bits of $B \oplus X$ are \emph{zeroes} then
the corresponding bit in $Y$ is a \emph{zero}

\item otherwise, the corresponding bit in $Y$ is an \emph{one}.
\end{itemize}
\end{itemize}
\end{algorithm}
\caption{The robust self-location algorithm.}\label{fig:algorithm}
\end{figure}

\begin{theorem}
\label{thm:error_correct} Given a grid of size $2^{k-1} \times
2^n$ and a $k \times n$ pixel sensor, if less than quarter of the
bits in each row and less than half of the bits in each column are
in error, then the algorithm accurately decodes the sensor
location.
\end{theorem}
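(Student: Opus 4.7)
The plan is to exploit two structural facts about a product array $X \otimes Y$: row $i$ equals $Y$ if $x_i = 0$ and $\bar{Y}$ if $x_i = 1$, and symmetrically column $j$ equals $X$ or $\bar{X}$ depending on $y_j$. Writing $Z = (X \otimes Y) \oplus \cE$, row $i$ of $Z$ is $(x_i \oplus x_1) \mathbf{1} \oplus Y \oplus e_i$ when compared against row $1$, so the XOR of row $i$ with the first row $D$ is
\[
A \oplus D \;=\; (x_i \oplus x_1)\,\mathbf{1} \,\oplus\, (e_i \oplus e_1),
\]
because the $Y$ components cancel. This is the key identity on which the whole argument rests.

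I would then analyze the Hamming weight of $A \oplus D$ by a triangle-inequality argument. If $x_i = x_1$, the weight is at most $|e_1| + |e_i| < n/4 + n/4 = n/2$, so the majority of bits are zero and the algorithm outputs the symbol $b$ in position $i$. If $x_i \neq x_1$, the weight is at least $n - |e_1| - |e_i| > n/2$, so the majority of bits are one and the algorithm outputs $\bar b$. Thus the first phase recovers $X$ correctly up to a single global complementation; the half de Bruijn property of $\cT$ resolves the ambiguity, since exactly one of $X$ and $\bar X$ appears in $\cT$.

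Once $X$ is known, the second phase is a direct per-column majority. For the $j$-th column $B$ of $Z$,
\[
B \oplus X \;=\; y_j\,\mathbf{1}\,\oplus\, e^{(j)},
\]
where $e^{(j)}$ is the $j$-th column of $\cE$. The column-error hypothesis $|e^{(j)}| < k/2$ immediately gives majority zeros when $y_j = 0$ and majority ones when $y_j = 1$, so the algorithm reads off $Y$ bit by bit.

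The only delicate point, and the one I would highlight in the write-up, is that the row-error budget is halved in phase one: because we compare each row against the fixed reference row $D$, the effective error is $e_i \oplus e_1$, and majority decoding works only if the combined weight stays below $n/2$. This is exactly why the assumption on rows must be strictly less than $n/4$ rather than $n/2$, matching the column assumption only after the doubling is accounted for. The remainder is routine: after establishing the two displayed identities, everything reduces to counting ones in vectors of length $n$ and $k$, and no further use of the de Bruijn or half de Bruijn properties is needed beyond the single ambiguity resolution step.
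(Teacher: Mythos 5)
Your proposal is correct and follows essentially the same route as the paper's proof: compare each row to the reference row and majority-vote (the error weight $|e_1|+|e_i|<n/2$ is exactly the paper's ``two rows originally the same agree in more than half of their bits''), resolve the global complement via the half de Bruijn property of $\cT$, then majority-decode each column against the recovered $X$. Your write-up is merely more explicit about the key XOR identity and about why the row threshold is $n/4$ rather than $n/2$, which is a worthwhile clarification but not a different argument.
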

\begin{proof}
Since the number of errors in a row is less than~$\frac{n}{4}$ it
follows that two rows which were originally the same will agree in
more than half of their bits and two complement rows will disagree
in more than half of their bits. Therefore, the related bits of
$X$ will be the same or different, respectively. Having all the
$k$ bits of $X$ in terms of the variable $b$, there is only one
assignment of a legal $k$-tuple since the vertical sequence is a
half de Bruijn sequence.

Having the correct vertical subsequence $X$ and since the number
of errors in a column is less than $\frac{n}{2}$ it follows that
if $X$ agree in more than $\frac{n}{2}$ bits with a column then
the corresponding bit of $Y$ is a \emph{zero}; and if it disagree
in more than $\frac{n}{2}$ bits with a column then the
corresponding bit of $Y$ is a \emph{one}.
\end{proof}

\begin{remark}
\label{rem:distance} Decoding can be done also if more than
quarter of the bits in some rows are in error. A slightly better
condition would be to require that the number of distinct
positions in error in any two rows is less than $\frac{n}{2}$.
This requirement can be further improved.
\end{remark}

Similar algorithm will also work if we will exchange between rows
and columns, or equivalently if we will consider a transposed
array. Therefore, we can exchange our assumption on the number of
wrong bits in a row or a column. But, having for example at least
half of the bits wrong in a given column (or a given row) will
cause a wrong identification of the original subsequences.

\begin{lemma}
\label{lem:halfR} Given a grid of size $2^{k-1} \times 2^n$ and a
$k \times n$ pixel sensor, if at least half of the bits in one of
the columns of a pixel sensor are in error, then we cannot ensure
accurate decoding of the original subsequences.
\end{lemma}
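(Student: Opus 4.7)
The plan is to exhibit a single received rectangle $Z$ that admits two distinct legal explanations, one of which carries at least half of some column in error, so that no decoder can reliably choose between them. I would fix any legal pair $(X,Y)$---meaning $X$ is a $k$-tuple of the vertical half de Bruijn sequence $\cT$ and $Y$ is an $n$-tuple of the horizontal de Bruijn sequence $\cS$---pick an arbitrary column index $j\in\{1,\ldots,n\}$, and let $Y'$ be the $n$-tuple obtained from $Y$ by complementing its $j$-th entry. The key point is that $\cS$ is a \emph{full} de Bruijn sequence of order $n$, so every $n$-tuple, including $Y'$, appears in $\cS$ exactly once; hence $(X,Y')$ also corresponds to a legal sensor position in $G$, distinct from that of $(X,Y)$.

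The product structure then does all the work. Because $X\otimes Y$ is a bitwise XOR, flipping a single entry of $Y$ complements exactly one entire column of the product and leaves every other column untouched. I would therefore take $\cE$ to be the error pattern that is all \emph{ones} in column $j$ and \emph{zero} everywhere else and observe that $(X\otimes Y)\oplus\cE=X\otimes Y'$. This $\cE$ has $k\geq k/2$ ones in a single column, so the lemma's hypothesis is satisfied under the interpretation that the true subsequences are $(X,Y)$, while the same received rectangle $Z=X\otimes Y'$ is also the error-free sensor output at the position of $(X,Y')$.

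The conclusion is then immediate: since a decoder sees only $Z$, and $Z$ is simultaneously consistent with $(X,Y')$ using zero errors and with $(X,Y)$ using the error pattern $\cE$, no decoder can separate the two scenarios; in particular, the majority-based algorithm of Fig.~\ref{fig:algorithm} will decode $Z$ as $(X,Y')$ and hence return the wrong position whenever the truth is the scenario satisfying the lemma's hypothesis. I do not anticipate a real obstacle here---the argument reduces to the defining window property of de Bruijn sequences together with the single-bit/single-column correspondence enforced by the XOR product. If a fully symmetric ambiguity is preferred, one can instead flip only $\lceil k/2\rceil$ entries of column $j$, so that each of $(X,Y)$ and $(X,Y')$ requires roughly $k/2$ column-$j$ errors; this refinement is not needed for the stated lemma.
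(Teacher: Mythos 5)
Your proposal is correct and follows essentially the same route as the paper: both exploit the fact that the full de Bruijn sequence $\cS$ contains every $n$-tuple, so two horizontal tuples differing in a single bit yield two legal $k\times n$ windows differing in exactly one column, which an error pattern concentrated in that column renders indistinguishable. Your variant of taking all $k$ bits of the column in error (so the received rectangle is exactly the other error-free window) is a slightly cleaner way to force the ambiguity than the paper's ``exactly half in error'' phrasing, but the underlying idea is identical.
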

\begin{proof}
Let $X$ and $Y$ two $n$-tuples which differ only in the first bit.
Both $X$ and $Y$ appears as a window of length $n$ in the de
Bruijn sequence $\cS$ of order $n$. Let $Z$ be a $k$-tuple which
appears as a window in the sequence $\cT$. The products $Z \otimes
X$ and $Z \otimes Y$ appears as $k \times n$ windows in the array
$\cT \otimes \cS$. Both $k \times n$ windows differ only in the
first column and it would be impossible to distinguish between the
two windows if half of the bits in the first column are in error.
If more than half of the bits in the first column are in error
then a wrong decoding of the sensor location will be made. The
same arguments can be applied to any other column.
\end{proof}
We note that by Lemma~\ref{lem:halfR} we cannot correct
$\left\lceil \frac{k}{2} \right\rceil$ or more random errors in a
$k \times n$ array. The reason is that the array is highly
redundant. This is quite weak from an error-correction point of
view. But, by Theorem~\ref{thm:error_correct} we are able to
correct about $\frac{kn}{4}$ errors in an $k \times n$ array if
less than $\frac{n}{4}$ errors occur in a row and less than
$\frac{k}{2}$ errors occur in a column. The reason is that
redundant rows and columns are used for the majority decoding.
This result is quite strong from error-correction point of view.
Thus, the weakness for one type of errors becomes an advantage for
another type of errors.

\begin{example}
The $7 \times 9$ sub-array of Fig.~\ref{grid2} has no more than
two errors in a row and three errors in a column. The first row
has more than half bits in common with the 5th and the 7th rows.
Thus the vertical pattern is $b\bar{b}\bar{b}\bar{b}b\bar{b}b$.
Suppose that $b=1$, i.e. the vertical pattern is $1000101$. We now
compare all of the columns with $1000101$.  If more than half of
the corresponding bits agree, the bit in the horizontal sequence
is \emph{one}; otherwise it is a \emph{zero}. Thus, the sequence
is $110111001$.  The sub-array with no errors is presented in
Fig.~\ref{grid1}.
\end{example}

\renewcommand{\arraystretch}{0.8}
\begin{figure}
\begin{scriptsize}
\centerline{
$\begin{array}{ccccccccc}
1 & 0 & 0 & 1 & 0 & 1 & 0 & 0 & 1  \\
0 & 0 & 0 & 0 & 0 & 1 & 1 & 1 & 0  \\
1 & 0 & 1 & 0 & 0 & 0 & 1 & 1 & 1  \\
0 & 0 & 1 & 0 & 0 & 0 & 1 & 0 & 0  \\
1 & 1 & 0 & 0 & 1 & 0 & 0 & 0 & 1  \\
0 & 0 & 1 & 0 & 1 & 0 & 1 & 1 & 0  \\
1 & 0 & 0 & 1 & 1 & 0 & 0 & 0 & 1
\end{array} $
}
\end{scriptsize}
\caption{A $7 \times 9$ sub-array with errors.} \label{grid2}

\vspace{0.5cm}

\renewcommand{\arraystretch}{0.8}
%\begin{figure}[ht]
\begin{scriptsize}
\centerline{ $\begin{array}{ccccccccc}
1 & 1 & 0 & 1 & 1 & 1 & 0 & 0 & 1  \\
0 & 0 & 1 & 0 & 0 & 0 & 1 & 1 & 0  \\
0 & 0 & 1 & 0 & 0 & 0 & 1 & 1 & 0  \\
0 & 0 & 1 & 0 & 0 & 0 & 1 & 1 & 0  \\
1 & 1 & 0 & 1 & 1 & 1 & 0 & 0 & 1  \\
0 & 0 & 1 & 0 & 0 & 0 & 1 & 1 & 0  \\
1 & 1 & 0 & 1 & 1 & 1 & 0 & 0 & 1
\end{array} $
}
\end{scriptsize}
\caption{The corrected sub-array.} \label{grid1}
\end{figure}
\renewcommand{\arraystretch}{1}

%\renewcommand{\arraystretch}{0.8}
%\begin{figure}
%\begin{tiny}
%\centerline{ $\begin{array}{ccccccccccccccccccc} 1 & 0 & 0 & 1 & 0
%& 1 & 0 & 0 & 1 & {\bold |} & 1 & 1 & 0 & 1 & 1 & 1 & 0 & 0 & 1  \\
%0 & 0 & 0 & 0 & 0 & 1 & 1 & 1 & 0 & {\bf |} & 0 & 0 & 1 & 0 & 0 &
%0 & 1
%& 1 & 0  \\
%1 & 0 & 1 & 0 & 0 & 0 & 1 & 1 & 1 & {\bf |} & 0 & 0 & 1 & 0 & 0 & 0 & 1 & 1 & 0  \\
%0 & 0 & 1 & 0 & 0 & 0 & 1 & 0 & 0 & {\bf |} & 0 & 0 & 1 & 0 & 0 & 0 & 1 & 1 & 0  \\
%1 & 1 & 0 & 0 & 1 & 0 & 0 & 0 & 1 & {\bf |} & 1 & 1 & 0 & 1 & 1 & 1 & 0 & 0 & 1  \\
%0 & 0 & 1 & 0 & 1 & 0 & 1 & 1 & 0 & {\bf |} & 0 & 0 & 1 & 0 & 0 & 0 & 1 & 1 & 0  \\
%1 & 0 & 0 & 1 & 1 & 0 & 0 & 0 & 1 & {\bf |} & 1 & 1 & 0 & 1 & 1 &
%1 & 0 & 0 & 1
%\end{array} $
%}
%\end{tiny}
%\caption{A $7 \times 9$ corrupted sub-array (left) and the
%corrected sub-array (right).} \label{grid}
%\end{figure}
%\renewcommand{\arraystretch}{1}

%\subsection{Probabilistic Analysis}

Now, we analyze the error rates in individual bits that
allow us to determine the probability that the position is
determined correctly. Given a $k \times n$ rectangle in which the
probability of each bit being correct is $p$ independent of the
other bits, we can determine the probability that each row
satisfies the condition above, that less than quarter of the bits are
in error.  Then the probability that each row is satisfactory is
the individual row probability raised to the $k^{\mbox{th}}$
power, the number of rows. For simplicity we assume now that $k=n$.

To find the probability that all of the rows satisfy the row
condition, we raise the probabilities $P(n; p)$ to the power of
the number of rows. These are given in Table~\ref{probtable2}.

\begin{table}[h]
\begin{center}
\begin{tabular}{c|cccc}
$p \setminus n$ & 8 & 16 & 32 & 64 \\ \hline
0.90  & 0.191 & 0.322 & 0.687 & 0.9858 \\
0.91  & 0.253 & 0.443 & 0.817 & 0.9957 \\
0.92  & 0.329 & 0.573 & 0.906 & 0.9989 \\
0.93  & 0.417 & 0.699 & 0.959 & 0.9998 \\
0.94  & 0.517 & 0.809 & 0.985 & $> 0.9999$ \\
0.95  & 0.624 & 0.894 & 0.996 & $> 0.9999$ \\
0.96  & 0.733 & 0.951 & 0.9991 & $> 0.9999$ \\
0.97  & 0.835 & 0.982 & 0.9999 & $> 0.9999$ \\
0.98  & 0.920 & 0.9962 & $> 0.9999$ & $> 0.9999$ \\
0.99  & 0.979 & 0.9997 & $> 0.9999$ & $> 0.9999$ \vspace*{12pt} \\
\end{tabular}
\caption{Probabilities that each of the rows of an $n \times n$
window has fewer than quarter of its bits in error when the
probability that each bit is correct is $p$.} \label{probtable2}
\end{center}
\end{table}

In order to have a probability of at least 0.99 that the row
condition is satisfied, we need $p > 0.994$ for $n = 8$, $p >
0.98$ for $n = 16$, $p > 0.95$ for $n = 32$, and $p > 0.91$ for $n
= 64$.  In order for the row condition to be satisfied with
probability at least $0.999$, it is sufficient that $p > 0.99$ for
$n = 16$, $p > 0.96$ for $n = 32$, and $p > 0.93$ for $n = 64$.
Also, if $p > 0.98$ for $n = 32$ or $p > 0.94$ for $n = 64$, the
row condition is satisfied with probability greater than 0.9999 .

The probability that the column condition (that less than half the
bits are in error) is not satisfied when the row condition is
satisfied is negligible.  For example, if we let $Q(n; p)$
represent the probability that the column condition is not
satisfied, i.e. \begin{equation} \label{prob3} Q(n; p) =
\sum_{i = \lceil \frac{n}{2} \rceil}^n \left(\begin{array}{c} n \\
i
\end{array} \right) p^i (1 - p)^{n-i} \;\;, \end{equation} then we have results
such as $Q(16; 0.99) = 1.2 \times 10^{-12}$. In a square array,
the probability that the column condition is not satisfied for at
least one of the columns is then $1 - (1 - 1.2 \times
10^{-12})^{16} = 1.9 \times 10^{-11}$.

%Fig.~\ref{fig:match_graph} shows how the probability of the
%location being determined correctly varies with the probability of
%each individual bit being correct for various window sizes.
%
%%\begin{figure}[htbp]
%%\begin{center}
%%\includegraphics[width=1\textwidth]{success_graph0.ps}
%%\caption{success}
%%\end{center}
%%\end{figure}
%
%
%\begin{figure}[htb]
%\begin{center}
%\includegraphics[width=0.5\textwidth]{match_graph0.pdf}
%\end{center}
%\caption{Probability of correct match for varying window sizes and
%probabilities for each individual bit being correct.}
%\label{fig:match_graph}
%\end{figure}

\section{Conclusion and Future Research}
\label{sec:conclude}

Implementing absolute self-location in a planar region using
special patterns is a viable and proven approach and can solve a
variety of technological problems. In this paper we proposed a
solution based on robust two-dimensional arrays with a
two-dimensional window property. The method also has a rather
strong error-correction capability. It enables to correct errors
if less than quarter of the bits in a row and less than half of
the bits in a column are in error.

In some applications, the alignment of the sensor array to the
grid pattern is not guaranteed. The sensor may be arbitrarily
translated and rotated, so that retrieving the local bit matrix is
not trivial. Position location of one-dimensional sequences, when
the orientation of the subsequence is not known was considered
in~\cite{DMRW}. The solution in two-dimensional arrays is to
sample the region at a somewhat higher resolution than $k$ by $n$,
and analyze the image in order to first estimate the pose of the
pattern of rows and columns. Since the proposed pattern has a very
pronounced structure consisting of identical or inverted rows (as
well as columns), this can greatly aid in the task. Using an
M-sequence and its complement as vertical and horizontal sequences
in our construction can also help in solving of the orientation
problem. A complete analysis of these issues is a problem for
future research.

There are many other future research problems in this area. Some
related to our specific construction and some are to new possible
construction methods.
\begin{enumerate}
\item As indicated in Remark~\ref{rem:distance}, the claim in
Theorem~\ref{thm:error_correct} can be strengthened. What is the
strongest claim on the error capability of our scheme? Do the de
Bruijn sequence and the half de Bruijn sequence that we selected
have any influence on this claim?

\item How can we improve the error-correction capabilities of our
scheme if the de Bruijn sequence and the half de Bruijn sequence
are derived from M-sequences with error-correction capabilities as
indicated in~\cite{KuWe92}.

\item The array obtained by our method can correct a limited
number of random errors, even so we proved that the probability
for such errors which the method cannot correct is negligible.
Generating arrays with window properties which can correct large
number of random errors is an important topic for future research.

\item Finally, we note that a folding method for generating
pseudo-random arrays from M-sequences was suggested
in~\cite{McSl76}. This method was subsequently generalized
in~\cite{Etz11}. Can this method be adapted also to generate
better pseudo-random arrays which can correct random errors? Using
the M-sequences as suggested by~\cite{KuWe92} for this purpose
could be the first step in attempting to find an answer to such
questions..
\end{enumerate}

%%*******************************************************************************
%%*                                                                             *
%%*                          Acknowledgment                                     *
%%*                                                                             *
%%*******************************************************************************
\section*{Acknowledgment}

The authors would like to thank the anonymous reviewers for their
very useful comments which greatly helped us to improve our paper.

%
%*************************** References *********************************
%
%\bibliography{allbib,extra}

\end{document}